\theoremstyle{plain}
\newtheorem{theorem}{Theorem}[section]
\newtheorem{lemma}[theorem]{Lemma}
\newtheorem{proposition}[theorem]{Proposition}
\theoremstyle{definition}
\theoremstyle{remark}
\DeclareMathAlphabet{\mathdutchcal}{U}{dutchcal}{m}{n}
\SetMathAlphabet{\mathdutchcal}{bold}{U}{dutchcal}{b}{n}
\DeclareMathAlphabet{\mathdutchbcal}{U}{dutchcal}{b}{n}
\newcommand{\executeiffilenewer}[3]{%
	\ifnum\pdfstrcmp{\pdffilemoddate{#1}}%
	{\pdffilemoddate{#2}}>0%
	{\immediate\write18{#3}}\fi%
}
\newcommand{%
	\executeiffilenewer{.svg}{.pdf}%
	{inkscape -z -C --file=.svg %
		--export-pdf=.pdf --export-latex}%
	\input{.pdf_tex}%
}[1]{%
	\executeiffilenewer{#1.svg}{#1.pdf}%
	{inkscape -z -C --file=#1.svg %
		--export-pdf=#1.pdf --export-latex}%
	\input{#1.pdf_tex}%
}
\begin{document}


\title{Data-driven aerodynamic shape design with distributionally robust optimization approaches}

\author{
\name{Long Chen\textsuperscript{a}\thanks{L. Chen. Email: long.chen@scicomp.uni-kl.de}, Jan Rottmayer\textsuperscript{a}\thanks{J. Rottmayer. Email: jan.rottmayer@scicomp.uni-kl.de}, Lisa Kusch\textsuperscript{a}\thanks{L. Kusch. Email: lisa.kusch@rptu.de}, Nicolas R. Gauger\textsuperscript{a}\thanks{N. R. Gauger. Email: nicolas.gauger@scicomp.uni-kl.de} and Yinyu Ye\textsuperscript{b}\thanks{Y. Ye. Email: yyye@stanford.edu}}
\affil{\textsuperscript{a}Chair for Scientific Computing, University of Kaiserslautern-Landau (RPTU), Germany;\\
\textsuperscript{b}Department of Management Science and Engineering and ICME, Stanford University, USA}
}

\maketitle

\begin{abstract}
We formulate and solve data-driven aerodynamic shape design problems with distributionally robust optimization (DRO) approaches. Building on the findings of the work \cite{gotoh2018robust}, we study the connections between a class of DRO and the Taguchi method in the context of robust design optimization. Our preliminary computational experiments on aerodynamic shape optimization in transonic turbulent flow show promising design results.
\end{abstract}

\begin{keywords}
Aerodynamic shape optimization; Distributionally robust optimization; Robust design optimization and Taguchi method; Stochastic gradient methods
\end{keywords}


\section{Introduction}

Optimization under uncertainties remains one of the major challenges for aerodynamic design optimization \cite{martins2022aerodynamic}. Uncertainty has the potential to render an optimal shape design worthless, even if obtained using sophisticated numerical approaches, as their conclusions are not realized in practice due to inevitable variations in problem data \citep{schulz2013optimal}. On the other hand, the increasing available data allows an unprecedented insight into uncertainties. For example, it is nowadays not difficult to acquire data about various key flight conditions (Mach number, altitude, temperature, etc.) \cite{brunton2021data}, but the incorporation of these data in design optimization is challenging. This paper presents a work towards data-driven design optimization using distributionally robust optimization (DRO) approaches \cite{delage2010distributionally} with applications in aerodynamic shape design.

\begin{figure}[h]
	\centering
	\begin{footnotesize}
	\executeiffilenewer{fig/multipoint.svg}{fig/multipoint.pdf}%
	{inkscape -z -C --file=fig/multipoint.svg %
		--export-pdf=fig/multipoint.pdf --export-latex}%
	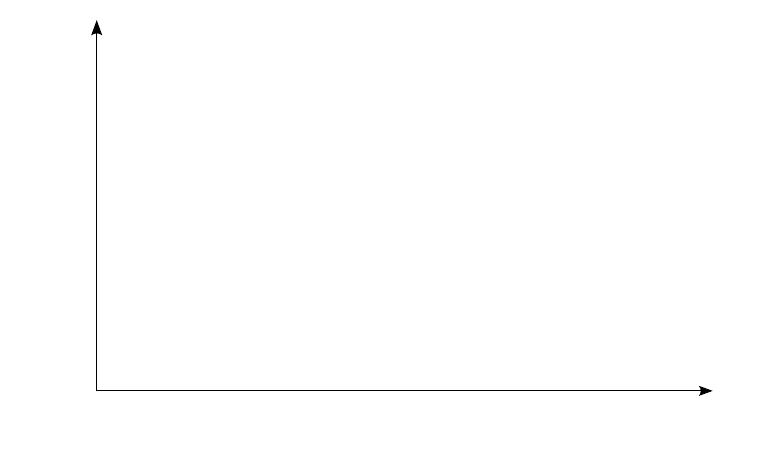%

		\caption{Multi-point aerodynamic shape optimization.}
		\label{fig:multipoint}
	\end{footnotesize}
\end{figure}

\subsection{Motivation}

To account for the variability in the flight conditions \cite{drela1998pros}, a multi-point optimization with the weighted sum formulation was introduced in \cite{reuther1997constrained}, and is considered a standard approach in academia and industry. Figure \ref{fig:multipoint} illustrates such a multi-point aerodynamic shape design optimization problem, in which the objective function is defined as a composite function through a weighted sum of five different flight conditions in Mach numbers and lift coefficients,
\begin{equation}
    f(x) = \sum_{i = 1}^{5}  \omega_i c_d \left(x; {Ma}_i, {c_l}_i \right),
\label{eq:multipoint_objective}
\end{equation}
where $c_d$ is the drag coefficient, $Ma$ is the mach number, $c_l$ is the target lift coefficient, $\omega$ is the weight factor, and $x$ is the shape design variable. The main shortcoming of \eqref{eq:multipoint_objective}
is the selection of appropriate flight points and their associated
weights \cite{nemec2004multipoint}.

\begin{figure}[h]
	\centering
	\begin{footnotesize}
	\executeiffilenewer{fig/flight_data_points_hel_muc.svg}{fig/flight_data_points_hel_muc.pdf}%
	{inkscape -z -C --file=fig/flight_data_points_hel_muc.svg %
		--export-pdf=fig/flight_data_points_hel_muc.pdf --export-latex}%
	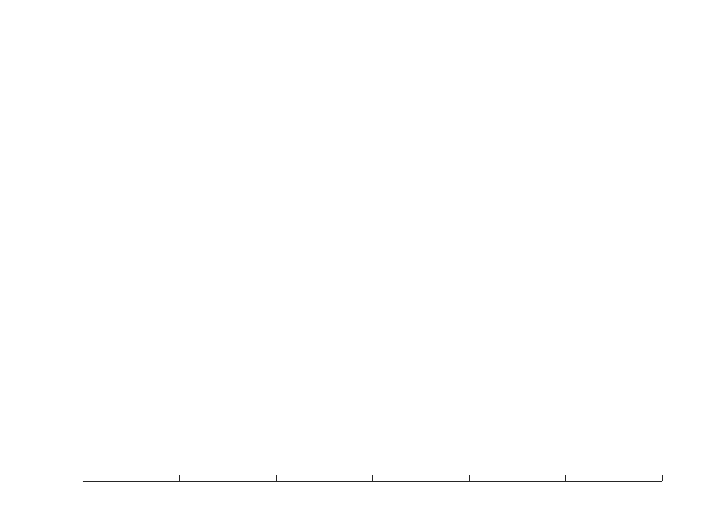%

		\caption{Flight points during cruising of an A320 from Helsinki to Munich. }
		\label{fig:data_flight_points}
	\end{footnotesize}
\end{figure}

Figure \ref{fig:data_flight_points} presents data points of a real flight of an A320 from Helsinki to Munich\footnote{$Ma$ is computed using real-world data of altitude and ground speed  (see https://www.grc.nasa.gov/www/k-12/airplane/mach.html). 
As we do not have the data for the lift force, $c_l$ is synthesized to be drawn from a normal distribution $N(0.5, 0.01)$ for demonstration purposes.}. Following the multi-point formulation, one way to incorporate these data into design optimization is to consider the following weighted sum objective function,

\begin{equation}
f(x) = \sum_{i = 1}^N \frac{1}{N} c_d(x; {Ma}_i, {c_l}_i), 
\label{eq:motivation_emo}
\end{equation}
where $N$ is the total number of data points, and $\omega_i = \frac{1}{N}, i = 1,2,..., N$. By assigning equal weight of probability to all observations, \eqref{eq:motivation_emo} becomes an empirical mean optimization problem.

If the number of data points is large, \eqref{eq:motivation_emo} would be a more realistic objective function than \eqref{eq:multipoint_objective} almost surely. On the other hand, one should realize that the empirical distribution is itself an approximation to the true distribution. For example, it is subject to shift due to measurement errors and varying conditions, such as different weather and routes. Furthermore, the design process of an aircraft (wing) precedes its mass production, and empirical data gathered (for example, from 
previous designs) cannot be the true data of the to-be-optimized design. Since an engineering solution should always be on the safe side, we want a design optimization framework that is robust to distributional changes. Such a framework is called distributionally robust optimization (DRO) \cite{delage2010distributionally}. Instead of minimizing the empirical expected value, DRO seeks an optimal design against the worst-case expected value evaluated on a set $\mathdutchcal{D}$ of candidate distributions $Q = \left[q_1, q_2,..., q_N\right]$, assuming that $\mathdutchcal{D}$ contains the true distribution $P_{true}$,
\begin{equation}
f(x) =  \max_{Q \in \mathdutchcal{D}} \sum_{i = 1}^N  q_i c_d(x;  {Ma}_i, {c_l}_i).
\label{eq:motivation_dro}
\end{equation}

This work aims to formulate, study, and solve data-driven aerodynamic shape design problems based on \eqref{eq:motivation_dro}. We summarize our main contributions as follows.

\subsection{Main contributions}

\begin{itemize}
    \item [1.] We present data-driven design optimization frameworks for aerodynamic shape design with DRO.
    \item [2.] We study the connections and differences between DRO and the classical robust design method of Taguchi.
    \item [3.] We present stochastic algorithms for solving the proposed aerodynamic shape design optimization problems that include additional inequality constraints.
    \item [4.] We report preliminary results on transonic airfoil shape design by applying our proposed optimization formulations and algorithms. 
\end{itemize}

\section{Related work}

\subsection{Aerodynamic shape optimization under uncertainties}

Probabilistic airfoil design dates back to Huyse \cite{Huyse2001}, who used a 
formulation based on minimizing the expected value of the drag coefficient. The expected drag coefficient was optimized under a uniformly distributed uncertain Mach number in a transonic Euler flow. Different studies for probabilistic airfoil design have been conducted afterwards. In \cite{Gumbert2005}, a design optimization is performed using the simplified assumption of independent
normally distributed geometrical parameters of a 3D wing. In \cite{Ong2006} geometrical and operational
uncertainties are considered separately for a 2D airfoil in Euler flow using a worst-case formulation. Lee et al. \cite{Lee2008} optimize a three-dimensional wing under uncertainties in the Mach number. The sampled mean and the variance of the drag coefficient are considered as objective functions for a multi-objective evolutionary algorithm.
In \cite{Schulz2009} different probabilistic design formulations are compared for a lift-constrained problem in a 2D Euler flow with uncertainties in the operating conditions. The Mach number and angle of attack are distributed using a truncated normal distribution. 

In the present work we consider stochastic optimization methods for aerodynamic shape optimization under uncertainties. This approach has also been pursued recently by other authors. In \cite{Geiersbach2020}, the authors present a stochastic optimization strategy for PDE-constrained optimization. In the context of aerodynamic shape optimization, stochastic gradient descent has been applied for the robust design optimization of the NACA-0012 airfoil in \cite{Jofre2022}, which considers uncertainties in operating conditions and model uncertainty. 

\subsection{Distributionally robust optimization}

Stochastic optimization is a classical framework that allows to model uncertainty within a decision-making framework. Stochastic optimization assumes that the decision maker has complete knowledge about the underlying uncertainty through a known probability distribution and minimizes a functional of the cost. Distributionally robust optimization (DRO) acknowledges that we only have partial knowledge of the statistical properties of the uncertain parameters. The core idea of DRO goes back to Scarf \cite{scarf1957min}, who proposed to define a set $\mathdutchcal{D}$ of probability distributions that is assumed to include the true distribution $P_{true}$, and reformulated the objective function with respect to the worst-case expected cost over the choice of a distribution in this set. His idea was first generalized and formally called DRO in \cite{delage2010distributionally}. This min-max DRO approach represents a systematic modeling framework that bridges the gap between data and decision-making. In the past decade, significant research has been done on DRO within the operations research and statistical learning communities (see, e.g., \cite{ben2013robust, esfahani2015data,levy2020large, wang2016likelihood}), and we refer a comprehensive review to \cite{rahimian2019distributionally}. In engineering design, DRO is a relatively new framework, with \cite{dapogny2023entropy}\cite{kapteyn2019distributionally} being the two recent literature. In this work, we formulate data-driven aerodynamic shape design problems using DRO frameworks and present practical algorithms to solve them. 




\subsection{Robust design and robust design optimization}


The Taguchi method lays the foundation of robust design and is considered the pioneering work in bringing statistics into engineering design \cite{beyer2007robust}\citep{unal1990taguchi}. A fundamental contribution of Taguchi was ``\textit{his observation that one should design a product in such a way as to make its performance insensitive to variations in variables beyond the designer's control}'' \cite{trosset1996taguchi}. Prior to the era of computational simulation and optimization, Taguchi method utilizes the design of experiment for parameter design \cite{chen1996procedure}\citep{nair1992taguchi}. With the advances in high performance computing, numerical analysis and optimization, a new field called robust design optimization (RDO) has emerged. In RDO, the mean-variance optimization (MV) is one of the most direct and fundamental mathematical optimization formulations\footnote{While mean-variance optimization is often referred to in the literature as the standard approach for robust design optimization, we would like to keep this statement open, because the influence of the Taguchi method is far-reaching \cite{tsui1992overview}. Instead, we would interpret MV as a representative mathematical optimization formulation of the Taguchi method.} following Taguchi's idea of not only optimizing the mean performance but to simultaneously reducing the variance \cite{vining1990combining}, 
\begin{equation}
\tag{MV}
\min_x  \mathbb{E}_{\xi \sim P}  \left[ f(x;\xi)\right] + \mu \mathbb{V}_{\xi \sim P} \left[ f(x; \xi) \right].
\label{eq:robust_design}
\end{equation}
In this work, building on the findings of the work \cite{gotoh2018robust} showing that MV is equivalent to $\phi$-divergence based DRO when considering a ``small amount of robustness'', we further study the connections and differences between MV and DRO in the robust design optimization context. In particular, we show that under certain conditions, the $\chi^2$-divergence based DRO can be viewed as a dual optimization formulation of the Taguchi method. 
\vskip 2mm

\section{Data-driven design optimization modeling frameworks}
\label{sec:frameworks}

Aerodynamic shape optimization is a typical PDE-constrained optimization problem, which is commonly described as
\begin{equation}
\begin{split}
&\min_{u,x} ~~~ F(x,u),  ~~~~~~~~ F: X \times U \rightarrow \mathbb{R}\\
&~\text{s.t.} ~~~~ H(x,u) = 0, ~~H: X \times U \rightarrow Z
\end{split}
\label{eq:pde_constrained_optimization}
\end{equation}
where x is the optimization variable, $u$ is the state variable, $X, U, Z$ are vector spaces, $F$ is the objective function, $H$ is the PDE constraint, and $F, H$ are sufficiently smooth functions. By implicit function theorem, if the Jacobian matrix of H about $u$ at a point $H(u_p,x_p) = 0, (u_p, x_p) \in X \times U$  is invertible, then there exists a unique continuously differentiable vector-valued function $\psi$ such that in a neighborhood of $(u_p, x_p)$ that satisfies
\begin{equation}
\begin{split}
\psi(x_p) &= u_p,\\
H(x,\psi(x)) &= 0.
\end{split}    
\end{equation}
Suppose that $\psi$ exists everywhere in the domain of the function $H$, then we can reformulate the PDE-constrained optimization \eqref{eq:pde_constrained_optimization} as an unconstrained optimization of solely the optimization variables x,
\begin{equation}
    \min_x ~~~ f(x) := F(x, \psi(x)).
\label{eq:unconstrained_pde_constrained_opt}
\end{equation}
By \eqref{eq:unconstrained_pde_constrained_opt}, in this work, we refer to a PDE-constrained optimization in the form \eqref{eq:pde_constrained_optimization} as an unconstrained optimization problem. For practical aerodynamic shape design, additional engineering constraints need to be considered. In a deterministic setting, we therefore consider a constrained optimization problem written as
\begin{equation}
\tag{$DET$}
\begin{split}
&\min_{x} ~~~ f(x) \\
&~\text{s.t.} ~~~~ g_i(x) \leq 0, ~ i = 1,2,...,m
\end{split}
\label{eq:DET}
\end{equation}
In the following, we present several data-driven optimization frameworks based on \eqref{eq:DET}.

\subsection{Empirical mean optimization for aerodynamic shape design}

We start with the formulation of empirical mean optimization (EMO) for aerodynamic shape design. Throughout this paper, we only consider the case where the data in the objective function is uncertain.
As the PDE constraint is reformulated in the objective function in \eqref{eq:unconstrained_pde_constrained_opt}, the considered model is general enough for many practical applications. 

Suppose we want to minimize the objective function $f(x; \xi)$ where $x \in \mathbb{R}^d$ is the (shape) design variable, and $\xi \in \mathbb{R}^r$ is a random variable. Assume we have observed N independent samples of $\xi_i, i = 1,2,...,N$, we want to minimize the empirical mean subject to deterministic constraints,
\begin{equation}
\tag{$EMO$}
\begin{split}
&\min_{x} ~~~ \mathbb{E}_{\xi \sim \hat{P}_N} \left[ f(x; \xi)  \right] = \sum_{i = 1}^N \frac{1}{N}  f(x; \xi_i) \\
&~\text{s.t.} ~~~~ g_i(x) \leq 0, ~ i = 1,2,...,m,
\end{split}
\label{eq:EMO}
\end{equation}
where $\hat{P}_N$ denotes the empirical distribution. By the law of large numbers, the empirical mean converges to the expected value as $N \rightarrow \infty$. Compared to the multi-point optimization \eqref{eq:multipoint_objective}, we may interpret \eqref{eq:EMO} as a multi-point optimization with a large number of points. While it is a heuristic choice of the weighing factors $\omega_i$ and data points $\xi_i$ for multi-point objective functions, \eqref{eq:EMO} minimizes the empirical risk using the known observations of $\xi$ \textemdash a fundamental principle in statistical learning. On the other hand, the statistically more meaningful optimization modeling framework of \eqref{eq:EMO} comes with the cost that the objective function is much more computationally expensive to evaluate due to the large $N$. 
To tackle the computational challenge, we rely on stochastic optimization approaches (see section \ref{sec:algorithms}), which have proven to be efficient and effective in countless machine learning applications. 

\subsection{Basics of distributionally robust optimization}

In contrast to the empirical mean optimizations, for DRO it is desired that the optimized design still has good performance under distribution shift. Specifically, DRO proposes to minimize the worst-case expected performance over a set $\mathdutchcal{D}$ of probability distributions $Q$, assuming that $\mathdutchcal{D}$ contains the true distribution $P_{true}$, 
\begin{equation}
\tag{$DRO_{con}$}
\begin{split}
&    \min_x  \Psi(x) := \max_{Q \in \mathdutchcal{D}(P, \rho)} \mathbb{E}_{\xi \sim Q} \left[ f(x; \xi)  \right], \\
& ~\textnormal{s.t.}~~  g_i(x) \leq 0, ~~ i = 1,2,...,m.
\label{eq:DRO}
\end{split}
\end{equation}
The set $\mathdutchcal{D}(P, \rho)$ is called the ambiguity set and is typically defined as
\begin{equation}
    \mathdutchcal{D}(P, \rho) := \{ Q: d(Q, P) \leq \rho \},
\end{equation}
where $d$ measures the distance between two probability distributions, $P$ is the nominal distribution, and $\rho \geq 0$ corresponds to the size (radius) of the ambiguity set.

Another well-established DRO formulation uses a soft penalty term instead of imposing a hard constrained ambiguity set, resulting in the penalized problem:
\begin{equation}
\tag{$DRO_{pen}$}
\begin{split}
& \min_x \Psi(x):= \max_Q \{\mathbb{E}_{\xi \sim Q} \left[ f(x;\xi)\right] - \delta d(Q,P)  \}, \\
& ~\textnormal{s.t.}~~  g_i(x) \leq 0, ~~ i = 1,2,...,m,
\end{split} 
\label{eq:penalized_DRO}
\end{equation}
where $\delta > 0$ is the penalty coefficient. By duality, the larger $\delta$, the less robustness is considered.

In the data-driven context, the nominal distribution is constructed from the observed data $\xi_i, i = 1,2,...,N$, i.e., we set $P = \hat{P}_N$ in \eqref{eq:DRO} and \eqref{eq:penalized_DRO}. 

\subsection{DRO frameworks based on $\phi$-divergence}
Obviously, different choice of the distance measure $d$ results in different DRO solutions. In this work, we focus on one class of such a statistical distance, the $\phi$-divergence \cite{ben2013robust}, due to its deep connections to the Taguchi method of robust design (see section \ref{sec:RD_is_DR}).

We consider $\phi$-divergence between distributions $Q$ and $P$,
\begin{equation}
    d_\phi(Q,P):= \int \phi\left( \frac{dQ}{dP}\right) dP.
\end{equation}
Suppose that the distributions $P$ and $Q$ are represented with the probability vectors, $\mathbf{p} = (p_1, ..., p_n)^T$ and $\mathbf{q} = (q_1, ..., q_n)^T$. The $\phi$-divergence of $\mathbf{q}$ from $\mathbf{p}$ is defined as
\begin{equation}
    d_\phi (\mathbf{q},\mathbf{p}) = \sum_{i = 1}^n p_i \phi(\frac{q_i}{p_i}).
    \label{eq:divergence_vec}
\end{equation}
Different choices for $\phi$ have been proposed in literature (see, e.g., \cite{pardo2018statistical}). In this work, we focus on the $\chi^2$-divergence and Kullback-Leibler (KL)/Burg entropy-based divergence and leave other useful divergences for future investigations. The function $\phi(t)$ for (modified) $\chi^2$-divergence reads
\begin{equation}
\phi_{\chi^2}(t) = \frac{1}{2}(t - 1)^2.
\end{equation}
For KL divergence, 
\begin{equation}
\phi_{kl}(t) = t \log t - t + 1.    
\end{equation}
The adjoint of KL divergence is Burg entropy-based divergence,
\begin{equation}
\phi_b(t) = - \log t + t - 1.    
\end{equation}

Figure \ref{fig:divergence_balls} illustrates different divergence balls with different radii. Note, KL divergence and Burg entropy are ``locally $\chi^2$'' \cite{polyanskiy2014lecture}, i.e., they can be well-approximated with $\chi^2$-divergence if the radius is small.

\begin{figure}[h]
\centering
\begin{footnotesize}
    \includegraphics[width=12cm]{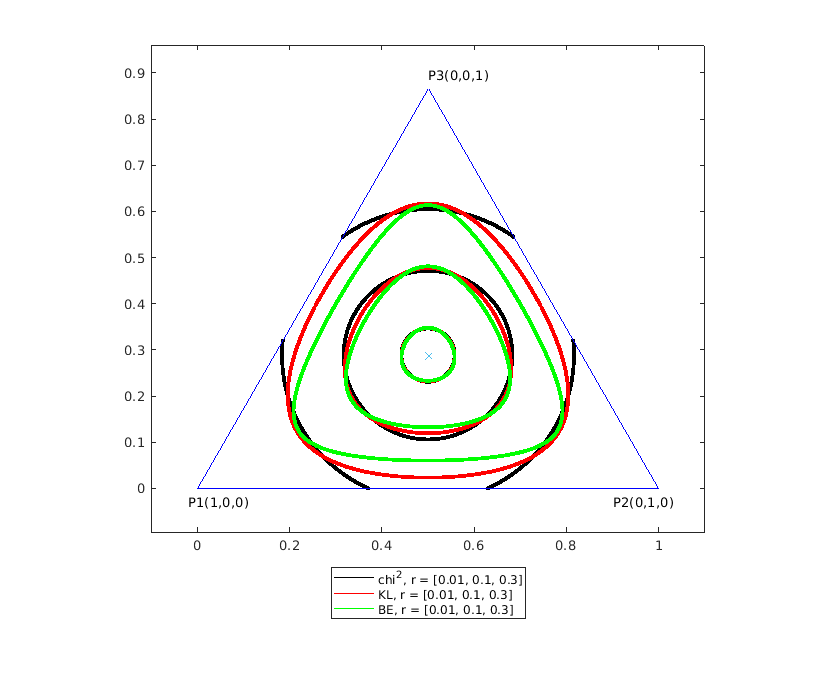}
    \caption{Ambiguity sets of different $\phi$-divergence in a barycentric coordinate system.}
    \label{fig:divergence_balls}
\end{footnotesize}
\end{figure}


\subsection{Statistically principled determination of $\mathdutchcal{D}(P, \rho)$}


DRO is a framework that naturally combines optimization with statistics. This allows for statistically principled determinations of the ambiguity set (distance measure and radius). We refer to \cite[section 6]{rahimian2019distributionally} for an overview. In the following, we give an example of the Burg entropy-based DRO. Recall, 
\begin{equation*}
\phi_b(t) = -\log t + t -1.
\end{equation*}
Consider the Burg entropy constrained ambiguity set with a uniform nominal distribution $\hat{P}_N$,
\begin{equation}
\mathdutchcal{D}(P, \rho) := \biggl\{ \mathbf{q}: \sum_{i = 1}^N \frac{1}{N} \log(\frac{1}{N q_i}) \leq \rho, \mathbf{q} \in \Delta^N \biggl\},
\label{eq:burg_entropy_set}
\end{equation}
Where $\Delta^N$ denotes the probability simplex. The inequality in the above formula is equivalent to
\begin{equation}
\sum_{i = 1}^N \log(q_i) \geq  \sum_{i=1}^N \log(\frac{1}{N}) -N \rho.
\end{equation}
\eqref{eq:burg_entropy_set} is closely related to likelihood robust optimization \cite{wang2016likelihood}, which makes assumptions that random variables taking values in set $\Xi = \{ \xi_1, \xi_2, ...,  \xi_{\tilde{n}}  \}$ and the set $\Xi$ is finite and known in advance. Assuming that N independent samples $\xi$ were observed, with $N_i$ occurrences of $\xi_i$. Then \eqref{eq:burg_entropy_set} rewrites as
\begin{equation}
\mathbb{D}(\gamma) = \biggl\{ \tilde{\mathbf{q}} = (\tilde{q}_1, \tilde{q}_2, ..., \tilde{q}_{\tilde{n}}) : \sum_{i = 1}^{\tilde{n}} N_i \log \tilde{q}_i \geq \gamma,  \tilde{\mathbf{q}} \in \Delta^{\tilde{n}} \biggr\},
\label{eq:lo_set}
\end{equation}
where $\gamma = \sum_{i = 1}^{\tilde{n}} N_i \log \frac{N_i}{N} -N \rho$, and $\mathbb{D}(\gamma)$ is called the likelihood robust distribution set. Notice that
\begin{equation}
\sum_{i = 1}^{\tilde{n}} N_i \log(\tilde{q}_i) = \log(\Pi_{i = 1}^{\tilde{n}} \tilde{q}_i^{N_i}),
\end{equation}
\eqref{eq:lo_set} implies that $\mathbb{D}(\gamma) $ contains all the distributions with support in $\Xi$ such that the observed data achieve an empirical likelihood of at least exp($\gamma$). With the set $\mathbb{D}(\gamma)$, the likelihood robust optimization writes
\begin{equation}
\min_x  \Psi(x) := \max_{\tilde{\mathbf{q}} \in \mathbb{D}(\gamma)}  \sum_{i = 1}^{\tilde{n}} \tilde{q}_i f(x; \xi_i).
\label{eq:LRO}
\end{equation} 
With Bayesian statistics and empirical likelihood theory, it was shown in \cite{wang2016likelihood} that one can approximately choose the threshold $\gamma$ as
\begin{equation}
    \gamma^* = \sum_{i = 1}^{\tilde{n}} N_i \log \frac{N_i}{N} - \frac{1}{2} \chi^2_{{\tilde{n}}-1, 1-\alpha},
\end{equation}
where $\chi^2_{{\tilde{n}}-1, 1-\alpha}$ is the $1-\alpha$ quantile of a $\chi^2$ distribution with ${\tilde{n}}-1$ degrees of freedom. 
\vskip 2mm
We conclude this section by remarking that there exist many other DRO frameworks that use different statistical distance measures for the ambiguity set, and that their applications to engineering design are worth investigating. 

\section{Connections and differences between DRO and Taguchi method}
\label{sec:RD_is_DR}

In this section, we first review recent results in the literature on the relationship between DRO and mean-variance optimization. We then show that, under certain conditions, DRO can be considered a dual optimization formulation of the Taguchi method. We also discuss the differences between the two methods when these conditions are not satisfied in order to give a complete comparison within the scope of robust design optimization.

\subsection{The equivalence of DRO and mean-variance optimization by small ambiguity set}

Let's first consider the penalized $\phi$-divergence DRO objective with nominal empirical distribution $\hat{P}_N$,
\begin{equation}
\min_x \Psi(x; \hat{P}_N):= \max_Q \{\mathbb{E}_{\xi \sim Q} \left[ f(x;\xi)\right] - \delta d_\phi(Q,\hat{P}_N)  \}.
\label{eq:DRO_objective}
\end{equation}
It was shown in \cite{gotoh2018robust} that, if $\delta$ is sufficiently large and $\phi$ satisfies the conditions,
\begin{itemize}
    \item [1.] $\phi: \mathbb{R} \rightarrow \mathbb{R} \cup \{+\infty \}$ is a closed, convex, function,
    \item [2.] $\phi(z) \geq \phi(1) = 0$ for all $z$, is twice continuously differentiable around $z = 1$ with $\phi^{\prime\prime}(1) > 0$, 
\end{itemize}
then, the penalized DRO is asymptotically equivalent to the mean-variance optimization up to $o(1/\delta)$,
\begin{equation}
\min_x \Psi(x; \hat{P}_N) = \min_x \mathbb{E}_{\hat{P}_N}  \left[ f(x;\xi)\right] + \frac{1}{2 \delta} \mathbb{V}_{\hat{P}_N} \left[ f(x; \xi) \right] + o(1/\delta).
\label{eq:equivalence_dro_mean_variance}
\end{equation}
Notice that the $\chi^2$, KL, and Burg entropy-based divergence considered in this work all satisfy these conditions. Hence, for large enough $\delta$, the penalty-based $\phi$-divergence DRO \eqref{eq:DRO_objective} well-approximates the mean-variance optimization, independent of the choice of the divergence measure.

Similar results to \eqref{eq:equivalence_dro_mean_variance} are shown for the DRO objective with hard constraint in the works in \cite{lam2016robust,namkoong2017variance}, \begin{equation}
\min_x\max_{Q \in \mathdutchcal{D}( \hat{P}_N, \rho)} \mathbb{E}_{\xi \sim Q} \left[ f(x; \xi)  \right] = \min_x \mathbb{E}_{\hat{P}_N}  \left[ f(x;\xi)\right] + \sqrt{2 \rho \mathbb{V}_{\hat{P}_N} \left[ f(x; \xi) \right]} + o(\rho).
\label{eq:dro_con_equivalence}
\end{equation}



\vskip 1mm 

With the DRO framework, we gain an important insight into the fundamental robust design framework \eqref{eq:robust_design} when $\mu$ is small: Adding variance/standard deviation to the expectation in the objective function can be seen as a robust optimization over certain distributions shifted from the nominal distribution. In the following, we provide more insight to the Taguchi method beyond the regime of ``small amount of robustness'' by studying $\chi^2$-divergence DRO and its dual formulation.



\subsection{$\chi^2$-divergence DRO: a dual view on Taguchi method}

We show non-asymptotic result on the equivalence with a particular choice of the divergence measure, the $\chi^2$-divergence. We give explicit conditions for the parameter $\delta$ and show that under these conditions, the $\chi^2$-divergence DRO can be viewed as a dual mathematical optimization formulation of the Taguchi method.

\begin{proposition}
\label{prop1}
Consider $\chi^2$-divergence distributionally robust optimization with 
\begin{equation*}
\phi_{\chi^2}(t) = \frac{1}{2}(t -1)^2.
\end{equation*}
Suppose that $\hat{P}_N$ is the empirical probability distribution on a set of independently and identically distributed data $\{ \xi_i \}_{i = 1}^N$, according to the true distribution $P_{true}$. Then, with any penalty $\delta > 0$ that satisfies
\begin{equation}
\mathbb{E}_{\hat{P}_N}\left[ f(x; \xi) \right]-f(x;\xi_k) \leq \delta, ~ \forall k,
\label{eq:delta_conditions}
\end{equation}
it holds
\begin{equation}
\max_Q \{\mathbb{E}_{Q} \left[ f(x;\xi)\right] - \delta d_{\phi_{\chi^2}}(Q,\hat{P}_N)  \} = \mathbb{E}_{\hat{P}_N}  \left[ f(x;\xi)\right] + \frac{1}{2\delta} \mathbb{V}_{\hat{P}_N} \left[ f(x; \xi) \right].
\label{eq:chi2_dro_equal_mv}
\end{equation}
\end{proposition}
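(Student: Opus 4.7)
The plan is to solve the inner maximization in closed form by recognizing that it is a strictly concave quadratic program on the probability simplex $\Delta^N$. With $\hat{P}_N$ uniform on the $N$ samples, and writing $f_i := f(x;\xi_i)$ and $\bar{f} := \mathbb{E}_{\hat{P}_N}[f(x;\xi)]$, the $\chi^2$-divergence evaluates to $d_{\phi_{\chi^2}}(Q,\hat{P}_N) = \frac{N}{2}\sum_i(q_i - 1/N)^2$, so the inner problem reduces to maximizing
\[
\sum_{i=1}^N q_i f_i \;-\; \frac{\delta N}{2}\sum_{i=1}^N\left(q_i - \tfrac{1}{N}\right)^2
\]
over $\mathbf{q} \in \Delta^N$. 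This is strictly concave, so the maximizer is unique.

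Next I would change variables to $s_i := q_i - 1/N$, which turns the equality constraint into $\sum_i s_i = 0$, the nonnegativity constraint into $s_i \geq -1/N$, and the objective into $\bar{f} + \sum_i s_i f_i - \frac{\delta N}{2}\sum_i s_i^2$. Temporarily dropping the inequality constraints and introducing a single multiplier $\lambda$ for the equality, the first-order conditions give $s_i^\star = (f_i - \lambda)/(\delta N)$; substituting into $\sum_i s_i^\star = 0$ pins down $\lambda = \bar{f}$. Hence the candidate optimum is
\[
q_i^\star \;=\; \frac{1}{N}\left(1 + \frac{f_i - \bar{f}}{\delta}\right).
\]

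The key step is verifying feasibility: $q_i^\star \geq 0$ for every $i$ is equivalent to $\bar{f} - f_i \leq \delta$, which is exactly the hypothesis \eqref{eq:delta_conditions}. Once feasibility is confirmed, the nonnegativity constraints are inactive at $q^\star$ and, by strict concavity, $q^\star$ is the unique maximizer of the constrained problem. Plugging $q^\star$ back into the objective and using the identity $\sum_i(f_i-\bar{f})f_i = \sum_i(f_i-\bar{f})^2 = N\,\mathbb{V}_{\hat{P}_N}[f(x;\xi)]$, the linear and quadratic contributions collapse to $\bar{f} + \mathbb{V}_{\hat{P}_N}[f(x;\xi)]/(2\delta)$, which is precisely the right-hand side of \eqref{eq:chi2_dro_equal_mv}.

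The main obstacle is not the algebra but articulating that condition \eqref{eq:delta_conditions} is the sharp threshold separating the interior-optimum regime from the active-constraint regime. For smaller $\delta$, some components $q_i^\star$ would be clipped to zero, the inequality constraints would become active, and the exact equivalence with mean-variance would break down, forcing one to carry out a full KKT/active-set analysis. Making this dichotomy explicit is, in my view, the main conceptual content of the proposition: it gives a quantitative non-asymptotic range of penalties on which $\chi^2$-DRO coincides exactly, and not just asymptotically in the sense of \eqref{eq:equivalence_dro_mean_variance}, with the Taguchi mean-variance objective.
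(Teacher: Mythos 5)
Your argument is correct, and it reaches the result by a genuinely different route from the paper. The paper's proof dualizes the simplex constraint for a \emph{general} $\phi$, invokes the conjugate function $\phi^*$ to obtain the dual representation \eqref{eq:equivalent_DRO_dual}, and only then specializes to $\phi_{\chi^2}$, whose conjugate is piecewise quadratic; condition \eqref{eq:delta_conditions} enters as the requirement that every sample sits on the quadratic branch $s_k \geq -1$ at the optimal multiplier $\lambda^\star = \mathbb{E}_{\hat{P}_N}[f(x;\xi)]$. You instead specialize to $\chi^2$ from the outset, observe that the inner problem is a strictly concave QP on $\Delta^N$, and solve it in the primal, exhibiting the worst-case distribution $q_i^\star = \tfrac{1}{N}\bigl(1 + (f_i - \bar f)/\delta\bigr)$ in closed form; condition \eqref{eq:delta_conditions} enters as exactly the statement $q^\star \in \Delta^N$, and your relaxation argument (the maximizer of the equality-constrained relaxation is feasible, hence optimal for the full problem) is sound, with the value computation checking out. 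What your route buys is elementarity --- no conjugate calculus --- and interpretability: one sees explicitly that the $\chi^2$ adversary reweights samples linearly in $f_i - \bar f$ and that \eqref{eq:delta_conditions} is the sharp threshold before components get clipped to zero, which is precisely the primal counterpart of the flat branch $\phi^*_{\chi^2}(s) = -\tfrac12$ for $s<-1$ that the paper uses in its later discussion of when the equivalence breaks (the vanishing contribution of $\nabla f(x;\xi_k)$ there is your $q_k^\star = 0$). What the paper's route buys is reusability: the dual representation \eqref{eq:equivalent_DRO_dual} is derived once for arbitrary $\phi$-divergences, where the inner problem is no longer a QP and no closed-form $q^\star$ exists, and the same machinery carries over with one extra multiplier to the hard-constraint case of Lemma \ref{lemma1} and underpins the ``dual view on the Taguchi method'' that motivates the section. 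Your closing remark about the interior-optimum versus active-set dichotomy is exactly the right reading of the non-asymptotic content of the proposition relative to \eqref{eq:equivalence_dro_mean_variance}.
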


\begin{proof}

~We first derive an equivalent formulation of the DRO objective with $\phi$-divergence in probability vector representation \eqref{eq:divergence_vec}. The key is to use duality theory on conjugate functions \cite{rockafellar1997convex}. The conjugate of a function $f:\mathbb{R} \rightarrow \mathbb{R} $ can be considered its dual object, and is defined as a function $f^*: \mathbb{R} \rightarrow \mathbb{R} \cup \{\infty \}$,
\begin{equation}
f^*(s) = \max_{t \geq 0} \{st - f(t) \}.
\label{eq:conjugate_f}
\end{equation}

Consider finite observations with the empirical distribution $\hat{P}_N$, the DRO objective function can be formulated with probability vectors as (we shorten the notation $q_i \geq 0, i = 1,2,...,N$ as $q\geq 0$ )
\begin{equation}
\Psi(x):= \max_{q \geq 0} \biggl\{ \sum_{i = 1}^N q_i f(x;\xi_i) - \delta \sum_{i = 1}^N p_i \phi(\frac{q_i}{p_i})  \biggr\},
\label{eq:dro_obj_1}
\end{equation}
with the constraint
\begin{equation}
h(q) = \sum_{i = 1}^N q_i - 1 = 0.
\label{eq:dro_obj_2}
\end{equation}
Introducing Lagrange Multiplier $\lambda$, we can rewrite the constrained maximization problem \eqref{eq:dro_obj_1}-\eqref{eq:dro_obj_2} as
\begin{equation}
\begin{split}
\Psi(x) &= \min_{\lambda \in \mathbb{R}} \max_{q \geq 0} \biggl\{ \sum_{i = 1}^N q_i f(x;\xi_i) - \delta \sum_{i = 1}^N p_i \phi(\frac{q_i}{p_i})  \biggr\} - \lambda \left( \sum_{i = 1}^N q_i - 1 \right) \\ 
&= \min_{\lambda \in \mathbb{R}} \max_{q \geq 0} \sum_{i = 1}^N \left[ q_i f(x;\xi_i)  - \delta p_i \phi(\frac{q_i}{p_i}) -\lambda q_i  \right] + \lambda.
\end{split}
\end{equation}
Since $\hat{P}_N$ is the empirical distribution, we have $p_i = \frac{1}{N}, i = 1,...,N$. By the definition of conjugate function \eqref{eq:conjugate_f}, it then follows,
\begin{equation}
\begin{split}
\Psi(x)  & = \min_{\lambda \in \mathbb{R}} \max_{q \geq 0} \sum_{i = 1}^N \left[ q_i f(x;\xi_i)  - \delta \frac{1}{N} \phi(N q_i)  -\lambda q_i  \right] + \lambda  \\
& = \min_{\lambda \in \mathbb{R}} \max_{q \geq 0} \delta \frac{1}{N}\sum_{i = 1}^N \left[ \left( \frac{f(x; \xi_i) - \lambda}{\delta} \right) N q_i - \phi(N q_i)  \right] + \lambda \\
& = \min_{\lambda \in \mathbb{R}} \frac{1}{N}\sum_{i = 1}^N \left[\delta  \phi^*  \left( \frac{f(x; \xi_i) - \lambda}{\delta} \right) + \lambda \right].
\end{split}
\end{equation}
Thus, the equivalent dual DRO objective with penalty formulation writes
\begin{equation}
    \Psi(x) = \min_{\lambda \in \mathbb{R}} \mathbb{E}_{\hat{P}_N} \left[\delta  \phi^*  \left( \frac{f(x; \xi) - \lambda}{\delta} \right) + \lambda \right].
\label{eq:equivalent_DRO_dual}    
\end{equation}
Recall the (modified) $\chi^2$-divergence, 
\begin{equation*}
\phi_{\chi^2}(t) = \frac{1}{2}(t - 1)^2,
\end{equation*}
whose conjugate function writes
\begin{equation}
\phi_{\chi^2}^*(s) =     
\left\{
\begin{split}
& -\frac{1}{2}    ~~~~~~~~~      s  < -1,   \\
& \frac{1}{2}s^2 + s, ~~~~ s \geq -1.\\
\end{split}
\right.
\label{eq:chi^2_div_conjugate}
\end{equation}
Having 
$s_k = \frac{f(x;\xi_k) - \lambda}{\delta}, $ and suppose that $s_k \geq -1$ holds for every $k = 1, 2,..., N$,
then
\begin{equation}
\begin{split}
    \Psi(x) &=  \min_{\lambda \in \mathbb{R}} \mathbb{E}_{\hat{P}_N} \delta \left[ \frac{1}{2} \left(\frac{f(x;\xi) - \lambda}{\delta} \right)^2 + \left(\frac{f(x;\xi) - \lambda}{\delta} \right) \right]  + \lambda \\
    & = \min_{\lambda \in \mathbb{R}} \mathbb{E}_{\hat{P}_N} \left[  \frac{1}{2\delta} \left( f(x;\xi) - \lambda \right)^2 + f(x;\xi) \right].
\end{split}    
\label{eq:dro_dual_chi2}
\end{equation}
The minimizer $\lambda^\star$ for \eqref{eq:dro_dual_chi2} is the arithmetic mean of $f(x; \xi_i), i = 1,..., N $, i.e., $\lambda^\star = \mathbb{E}_{\hat{P}_N } \left[ f(x; \xi) \right]$. Thus, 
\begin{equation}
\Psi(x) = \frac{1}{2\delta}\mathbb{E}_{\hat{P}_N } \left[ (f(x;\xi) - \mathbb{E}_{\hat{P}_N } \left[ f(x; \xi) \right] )^2 \right] + \mathbb{E}_{\hat{P}_N } \left[ f(x; \xi)\right],
\label{eq:dro_equal_mv}
\end{equation}
which is the mean-variance optimization \eqref{eq:robust_design} objective function. Hence our conclusion: $\chi^2$-divergence based DRO is equivalent to the mean-variance optimization when $\mathbb{E}_{\hat{P}_N}\left[ f(x; \xi) \right]-f(x;\xi_k) \leq \delta, k = 1,2,...,n $.  
\end{proof}

A similar result can be shown for DRO with hard constraint formulation using the same procedure as in the proof of Proposition \ref{prop1}.

\begin{lemma}
\label{lemma1}
Consider $\chi^2$-divergence distributionally robust optimization with 
\begin{equation*}
\phi_{\chi^2}(t) = \frac{1}{2}(t -1)^2.
\end{equation*}
Suppose that $\hat{P}_N$ is the empirical probability distribution on a set of independently and identically distributed data $\{ \xi_i \}_{i = 1}^N$, according to the true distribution $P_{true}$. Then, with any parameter $\rho > 0$ that satisfies
\begin{equation}
\left( \mathbb{E}_{\hat{P}_N}\left[ f(x; \xi) \right] - f(x;\xi_k) \right) \sqrt{2\rho} \leq \sqrt{\mathbb{V}_{\hat{P}_N}\left[ f(x; \xi) \right]}, ~~ \forall k, 
\label{eq:rho_conditions}
\end{equation}
it holds
\begin{equation}
\max_{Q \in \mathdutchcal{D}_{\chi^2}( \hat{P}_N ,\rho)} \mathbb{E}_{\xi \sim Q} \left[ f(x; \xi)  \right] = \mathbb{E}_{\hat{P}_N}  \left[ f(x;\xi)\right] + \sqrt{2 \rho \mathbb{V}_{\hat{P}_N} \left[ f(x; \xi) \right]}.
\label{eq:chi2_dro_equal_std}
\end{equation}
\end{lemma}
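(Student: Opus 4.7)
The plan is to follow the same Lagrangian / conjugate-function machinery used in the proof of Proposition~\ref{prop1}, with the only structural change being that the divergence constraint $d_{\phi_{\chi^2}}(Q,\hat{P}_N) \leq \rho$ now contributes a second Lagrange multiplier $\eta \geq 0$ in place of the fixed penalty $\delta$. First I would write the inner maximization in probability-vector form, introduce $\lambda \in \mathbb{R}$ for the simplex constraint $\sum_i q_i = 1$ and $\eta \geq 0$ for the divergence constraint, invoke strong duality (the feasible set is a compact convex subset of the simplex and the objective is linear in $q$), and then apply the definition of the conjugate $\phi^*$ exactly as in \eqref{eq:conjugate_f}--\eqref{eq:equivalent_DRO_dual}. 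This yields the dual reformulation
\[
\Psi(x) \;=\; \min_{\lambda \in \mathbb{R},\, \eta \geq 0} \left\{ \eta \rho + \eta\, \mathbb{E}_{\hat{P}_N}\!\left[\phi^*\!\left( \frac{f(x;\xi) - \lambda}{\eta}\right)\right] + \lambda \right\}.
\]

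Next, I would specialize to the $\chi^2$-divergence and plug in the explicit conjugate \eqref{eq:chi^2_div_conjugate}. Provisionally assuming the smooth branch $s_k := (f(x;\xi_k) - \lambda)/\eta \geq -1$ holds for every $k$, the dual collapses to
\[
\Psi(x) \;=\; \min_{\lambda \in \mathbb{R},\, \eta \geq 0} \left\{ \eta\rho + \mathbb{E}_{\hat{P}_N}[f(x;\xi)] + \frac{1}{2\eta}\, \mathbb{E}_{\hat{P}_N}\!\left[(f(x;\xi) - \lambda)^2\right] \right\}.
\]
The inner minimization over $\lambda$ is a standard least-squares step giving $\lambda^\star = \mathbb{E}_{\hat{P}_N}[f(x;\xi)]$, after which the remaining scalar problem $\min_{\eta > 0}\{\eta\rho + \mathbb{V}_{\hat{P}_N}[f(x;\xi)]/(2\eta)\}$ is solved by AM--GM (or first-order conditions) at $\eta^\star = \sqrt{\mathbb{V}_{\hat{P}_N}[f(x;\xi)]/(2\rho)}$, producing the claimed value $\mathbb{E}_{\hat{P}_N}[f(x;\xi)] + \sqrt{2\rho\, \mathbb{V}_{\hat{P}_N}[f(x;\xi)]}$.

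The main obstacle is discharging the provisional assumption $s_k \geq -1$ at the candidate optimum $(\lambda^\star, \eta^\star)$, since on the other branch of \eqref{eq:chi^2_div_conjugate} one has $\phi^*_{\chi^2} \equiv -1/2$ and the computation above is no longer valid. Substituting $\lambda^\star$ and $\eta^\star$ rewrites $s_k \geq -1$ as
\[
\mathbb{E}_{\hat{P}_N}[f(x;\xi)] - f(x;\xi_k) \;\leq\; \sqrt{\mathbb{V}_{\hat{P}_N}[f(x;\xi)]/(2\rho)},
\]
which after multiplication by $\sqrt{2\rho}$ is exactly hypothesis~\eqref{eq:rho_conditions}. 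Under this hypothesis the candidate $(\lambda^\star,\eta^\star)$, together with the corresponding primal $q^\star$ recovered from the KKT stationarity condition $q_i^\star/p_i = 1 + s_i^\star$, is primal-dual feasible, so strong duality closes the argument and identity~\eqref{eq:chi2_dro_equal_std} follows.
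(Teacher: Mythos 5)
Your proposal is correct and follows essentially the same route as the paper's proof: Lagrangian dualization of the simplex and divergence constraints, the conjugate-function reformulation, specialization to $\phi_{\chi^2}^*$ on the branch $s_k \geq -1$, and minimization over $(\lambda,\eta)$ yielding $\lambda^\star = \mathbb{E}_{\hat{P}_N}[f(x;\xi)]$ and $\eta^\star = \sqrt{\mathbb{V}_{\hat{P}_N}[f(x;\xi)]/(2\rho)}$. Your explicit check that the branch condition at $(\lambda^\star,\eta^\star)$ is precisely hypothesis \eqref{eq:rho_conditions} is a welcome extra step that the paper's proof leaves implicit.
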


\begin{proof}
Introducing Lagrange Multiplier $\lambda, \nu$, we can rewrite the hard constraint DRO objective as
\begin{equation}
\begin{split}
\Psi(x) &= \min_{\lambda \in \mathbb{R}, \nu \geq 0} \max_{q \geq 0}  \sum_{i = 1}^N q_i f(x;\xi_i)  - \lambda \left( \sum_{i = 1}^N q_i - 1 \right) - \nu \left(\sum_{i = 1}^N \frac{1}{N} \phi(N q_i) - \rho \right)\\ 
&= \min_{\lambda \in \mathbb{R}, \nu \geq 0} \max_{q \geq 0} \frac{\nu}{N} \sum_{i = 1}^N \left[ \left( \frac{ f(x;\xi_i)  -\lambda}{\nu} \right) N q_i  - \phi(N q_i) \right] + \lambda + \nu \rho \\
& = \min_{\lambda \in \mathbb{R}, \nu \geq 0} \frac{\nu}{N}  \sum_{i = 1}^N \phi^* \left( \frac{ f(x;\xi_i)  -\lambda}{\nu} \right) + \lambda + \nu \rho. 
\end{split}
\end{equation}
Hence,
\begin{equation}
\Psi(x) = \min_{\lambda \in \mathbb{R}, \nu \geq 0} \mathbb{E}_{\hat{P}_N} \left[\nu  \phi^*  \left( \frac{f(x; \xi) - \lambda}{\nu} \right) + \lambda + \nu \rho \right].
\end{equation}
Suppose that $ \left( \frac{f(x; \xi_k) - \lambda}{\nu} \right) \geq -1 $ holds for every $k = 1, 2,..., N$, then with $\chi^2$-divergence we have
\begin{equation}
\Psi(x)  = \min_{\lambda \in \mathbb{R}, \nu \geq 0} \mathbb{E}_{\hat{P}_N} \left[ \frac{1}{2\nu}(f(x;\xi) - \lambda)^2 + f(x; \xi) + \nu \rho \right].
\label{eq:dual_DRO_con}
\end{equation}
The right hand side of \eqref{eq:dual_DRO_con} attains the minimum at
\begin{equation}
\begin{split}
\lambda^\star &= \mathbb{E}_{\hat{P}_N } \left[ f(x; \xi) \right], \\
\nu^\star & = \sqrt{\frac{\mathbb{V}_{\hat{P}_N}\left[ f(x; \xi) \right]}{2\rho}}.
\end{split}    
\end{equation}
Finally,
\begin{equation}
\Psi(x)  =  \mathbb{E}_{\hat{P}_N} \left[f(x; \xi) \right] + \sqrt{2\rho \mathbb{V}_{\hat{P}_N}\left[f(x; \xi) \right] }.
\end{equation}
\end{proof}

Comparing \eqref{eq:dro_obj_1} -\eqref{eq:dro_obj_2} with \eqref{eq:equivalent_DRO_dual}, the former maximization problem over distributions in the DRO objective is the dual of the latter minimization problem over the Lagrange multiplier $\lambda$. When using the $\chi^2$-divergence, the analytical solution to \eqref{eq:equivalent_DRO_dual} is the mean-variance objective function. Therefore, under conditions \eqref{eq:delta_conditions}, the $\chi^2$-divergence DRO objective is the dual of the mean-variance objective. The dual view on the Taguchi method with DRO allows new possibilities in the modeling of robust engineering design problems:
\begin{itemize}
    \item [1.] If $f(x; \xi)$ is convex in $x$, then the DRO objectives are convex surrogates for the variance/standard deviation regularized objectives, which are generally non-convex even if $f$ is convex in $x$. Therefore, if the underlying design optimization problem is convex (or can be equivalently formulated as a convex problem), an equivalent DRO formulation is easier to solve (see \cite{namkoong2017variance}).
    \item [2.] DRO allows statistically principled and data-driven analysis and determination of the robust parameter $\delta$ and $\rho$, which are good suggestions for the factors in mean-variance/standard deviation formulations. Thus, it is an immediate contribution to existing robust design frameworks and algorithms.
    \item [3.] For sufficiently large robustness (i.e., $\delta$ small enough or $\rho$ large enough), different $\phi$-divergence results in substantially different ambiguity sets (see figure \ref{fig:divergence_balls}). In such situations, different $\phi$-divergence DROs will generally give different design solutions. Intuitively, one may think that different $\phi$-divergences lead to ``mean''-``variance'' (-``standard deviation'') optimizations, in which the statistical moments are differently measured. 
\end{itemize}

\subsection{Differences between DRO and mean-variance optimization}

Proposition \ref{prop1} states that if $\delta$ satisfies \eqref{eq:delta_conditions}, $\chi^2$-divergence DRO is the dual formulation of the mean-variance optimization. In the following, we study the case when \eqref{eq:delta_conditions} cannot be satisfied and elaborate on the differences between both optimization frameworks. 

Suppose that $\delta$ is small enough such that $s_k = \frac{f(x;\xi_k) - \lambda^\star}{\delta} < -1 $ for some sample $\xi_k$, where $\lambda^\star$ is the minimizer of \eqref{eq:equivalent_DRO_dual}. Then, $\phi_{\chi^2}^*(s_k) = -\frac{1}{2}$ by the conjugate function \eqref{eq:chi^2_div_conjugate}. As a result, the equivalence of DRO and MV shown in \eqref{eq:chi2_dro_equal_mv} no longer holds. When computing the gradient of the DRO objective, the following differences appear when compared to the equivalent MV objective \eqref{eq:dro_equal_mv}:

\begin{itemize}
    \item [1.] the minimizer $\lambda^\star \neq \mathbb{E}_{\hat{P}_n } \left[ f(x; \xi) \right]$ in general;
    \item [2.] the contribution of the gradient $\nabla f(x; \xi_k)$ vanishes in $\nabla \Psi(x)$.
\end{itemize}

To gain some intuition of these differences, we consider an illustrative example with the objective function (\cite[equation (28)]{beyer2007robust})
\begin{equation}
\ell(x; \xi) = \xi - (\xi - 1)x^2,
\label{eq:example_ell}
\end{equation}
where $0 \leq x \leq 1$, and $\xi$ has the support $\Xi = \{ -0.8, -0.4, 0.0, 0.4, 0.8 \}$. Furthermore, we assume $\xi$ has a nominal empirical distribution $\hat{P} = \left[ \frac{1}{5}, \frac{1}{5}, \frac{1}{5}, \frac{1}{5}, \frac{1}{5} \right]$. From figure \ref{fig:dro_ro_mv_comparisons}, it is clear that,
\begin{equation}
\begin{split}
\mathop{\text{argmin}}_{ 0 \leq x \leq  1} ~ \mathbb{E}_{\hat{P}} \left[ \ell(x; \xi) \right] = 0, \\
\mathop{\text{argmin}}_{ 0 \leq x \leq  1} ~ \mathbb{V}_{\hat{P}} \left[ \ell(x; \xi) \right] = 1. \\
\end{split}
\end{equation}
Consider the mean-variance optimization for $\ell(x; \xi)$ with $\mu \geq 0$,
\begin{equation}
\min_{ 0 \leq x \leq  1} \mathbb{E}_{\hat{P}}  \left[ \ell(x;\xi)\right] + \mu \mathbb{V}_{\hat{P}} \left[ \ell(x; \xi) \right].
\label{eq:mv_example}
\end{equation}
The solution of the above bi-objective optimization lies in the Pareto frontier $ \{x \in \mathbb{R}: 0 \leq  x \leq 1 \}$, depending on the choice of $\mu$. 

\begin{figure}[h]
\centering
\begin{footnotesize}
    \includegraphics[width=9cm]{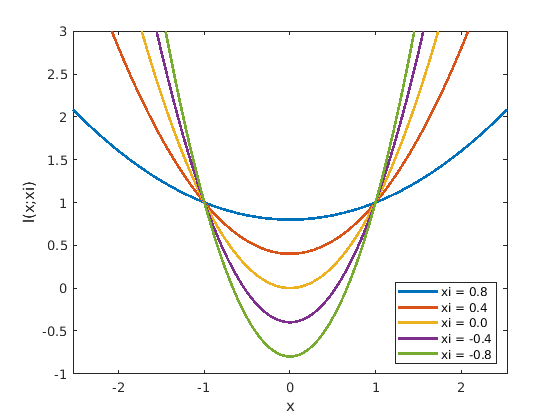}
    \caption{Function plots for \eqref{eq:example_ell} with different $\xi$.}
    \label{fig:dro_ro_mv_comparisons}
\end{footnotesize}
\end{figure}

Let's now consider the $\phi$-divergence DRO formulation for \eqref{eq:example_ell},
\begin{equation}
\min_{ 0 \leq x \leq  1} \Psi(x) = \max_{q \geq 0} \biggl\{ \sum_{i = 1}^5 q_i \ell(x;\xi_i) - \delta \sum_{i = 1}^5 p_i \phi(\frac{q_i}{p_i})  \biggr\}.
\label{eq:dro_obj_example}
\end{equation}
Let $q^\star(x) = \left[q_1^\star(x), q_2^\star(x), ...,  q_5^\star(x) \right]$ be the maximizer of the inner problem at $x$, we can write the gradient of the DRO objective \eqref{eq:dro_obj_example} as
\begin{equation}
\nabla \Psi(x) = \sum_{i = 1}^5 q_i^\star(x) \nabla \ell(x;\xi_i),
\end{equation}
which is a convex combination of all sample gradients. By a gradient descent procedure, we can see that
\begin{equation}
\mathop{\text{argmin}}_{ 0 \leq x \leq  1} ~ \Psi(x) = 0, 
\label{eq:example_dro_res}
\end{equation}
and the result hold for any $\delta \geq 0$. The result \eqref{eq:example_dro_res} makes sense, because the solution of DRO is upper bounded by the solution of the related robust optimization,
\begin{equation}
\min_{ 0 \leq x \leq  1} \max_{\xi \in \Xi} ~~\xi - (\xi - 1)x^2,
\end{equation}
which seeks a robust solution over the supports instead of the distributions. Obviously, $\xi = 0.8$ maximizes the inner optimization problem (notice that $0 \leq x \leq 1$), and the minimizer for $\ell(x; \xi = 0.8)$ is $x^\star = 0$. 

\vskip 1mm

To summarize, while mean-variance optimization is a bi-objective optimization, the min-max formulation of DRO results in solutions that are upper bounded by the related robust optimization. As a consequence, DRO formulation provides a degree of certainty to avoid generating overly conservative designs. On the other hand, depending on applications, the whole mean-variance trade-off Pareto front is worthwhile investigating.

\section{Algorithms}
\label{sec:algorithms}

The major challenges in applying the present data-driven optimization modeling frameworks to aerodynamic shape design include 1) the linear scaling of the number of function/gradient evaluations with the number of data points, and 2) the high computational cost in solving even a single-point CFD computation. If there is a surrogate model at hand that provides cheap (and accurate enough) evaluations of $f(x; \xi)$ for any feasible $x$ and $\xi$, then one can resort to the empirical mean formulation \eqref{eq:EMO} or DRO formulations \eqref{eq:DRO} and \eqref{eq:penalized_DRO} and solve these data-driven optimization problems with well-established deterministic optimization solvers. However, training such a surrogate for practical aerodynamic shape design problems is yet an ongoing endeavor, and it is especially demanding if the number of shape variable $x$ and the dimension of the uncertain parameter $\xi$ is large. In this work, instead, we solve the present optimization formulations using stochastic gradient approaches and make use of efficient gradient computations via the adjoint method and algorithmic differentiation. 

\subsection{Stochastic gradient methods for EMO and DRO}

Perhaps the most well-known method to solve the unconstrained empirical mean optimization problem is the stochastic gradient descent (SGD). Stochastic gradient methods solve EMOs with a number of gradient computations independent of the total number of data points $\xi$ and the dimension of the design variable $x$. If we only solve unconstrained EMOs, we can indeed rely on the rich optimization algorithms for stochastic optimization with SGD and its advanced variants.

Solving DRO with stochastic gradient methods is more challenging because the inner maximization over candidate distributions makes cheap sampling-based gradient estimates biased (i.e., it exists approximation error). The work \cite{levy2020large} shows that the bias upper bound is $O(\sqrt{1/n})$ and $O(1/n)$ for $\chi^2$ constraint and $\chi^2$ penalty formulation, respectively, where $n$ is the batch size. Despite the somewhat pessimistic error bound, the same work reported promising computational experiments that the error floor becomes negligible even for batch sizes as small as 10. The work \cite{ghosh2018efficient} considers the same unconstrained DRO setting and suggests dynamically increasing the batch sizes to balance the accuracy and computational efficiency. For practical aerodynamic shape design, a large batch size is often prohibitively expensive. Under a constrained computational budget, we choose a fixed batch size $n$ for our algorithms. 

\subsection{Dealing with additional inequality constraints}

One challenge left is to deal with the additional inequality constraints in \eqref{eq:EMO}, \eqref{eq:DRO} and \eqref{eq:penalized_DRO}. Recently, some of the authors have proposed a gradient method for inequality constrained optimization (GDAM) \cite{chen2023gradient}. In this work, we extend it for the setting of stochastic optimization. 

\subsubsection{GDAM in the deterministic setting}
\label{sec:GDAM}

The method can be motivated with the interior-point method for constrained optimization \cite{nocedal1999numerical}. Consider the deterministic constrained optimization problem
\eqref{eq:DET}, the logarithmic barrier function writes
\begin{equation}
\Phi(x)= - \sum_{i=1}^m \log (-g_i(x)).
\end{equation}
The classical barrier interior-point method solves a sequence of $\eta$-subproblems,
\begin{equation}
\min_x ~~  f(x) + \eta \Phi(x),
\label{eq:logarithmic_barrier}
\end{equation}
where $\eta > $ is the barrier parameter. As $\eta \rightarrow 0$, the solution of \eqref{eq:logarithmic_barrier} converges to the original problem \eqref{eq:DET}. Following the double-loop structure of second-order IPM, one can design a gradient method: 1) iteratively decrease the barrier parameter $\eta$ in the outer loop, and 2) solve the subproblem \eqref{eq:logarithmic_barrier} with the negative gradient direction
\begin{equation}
    d_{\Phi,\eta} = - \nabla f(x) - \eta \nabla \Phi(x).
\label{eq:gd_barrier}    
\end{equation}
In contrast to the double-loop method, GDAM suggests a descent direction of the objective $f(x)$ as
\begin{equation}
\tag{GDAM}
\mathbf{s}_\zeta (x) = -\frac{\nabla f(x)}{\| \nabla f(x) \|}
-\zeta\frac{\nabla \Phi(x)}{\| \nabla \Phi(x) \|}, ~~ \zeta \in [0,1),
\label{eq:GDAM}
\end{equation}
where $\| \cdot \|$ denotes L2-norm. Scaling GDAM's search direction $s_\zeta$ with $\| \nabla f(x_k) \|$, we obtain
\begin{equation}
\|\nabla f(x) \| s_\zeta = -\nabla f(x) - \zeta \frac{\| \nabla f(x) \|}{\| \nabla \Phi(x) \|} \nabla \Phi(x).
\label{eq:gdam_barrier_connection}
\end{equation}
Comparing \eqref{eq:gdam_barrier_connection} with \eqref{eq:gd_barrier}, it is then clear that the (scaled) GDAM suggests a \textit{dynamic} computation of the barrier parameter at \textit{each step} $k$,
\begin{equation}
 \eta(x_k) =\zeta \frac{\| \nabla f(x_k) \|}{\| \nabla \Phi(x_k) \| }.
 \label{eq:barrier_zeta}
\end{equation}
It was shown in \cite{chen2023gradient} that GDAM is a computationally practical method for shape optimizations, which motivated us to its extension for solving the present data-driven shape design problems. 

\subsubsection{GDAM in the stochastic setting}
\label{sec:GDAMstoch}

For stochastic optimizations, solving $\eta$-barrier subproblems with stochastic gradient methods (i.e., with a stochastic gradient estimation for \eqref{eq:gd_barrier}) has been used in machine learning applications when additional constraints are considered, see, e.g., \cite{kervadec2022constrained}\cite{liu2020ipo}. The duality gap associated with primal $x^\star$ of the $\eta$-barrier problem and implicit dual feasible $\lambda^\star$ of the original inequality constrained problem \eqref{eq:DET} is upper bounded by $m \eta$. However, a small $\eta$ leads to high gradient oscillations near the boundary of the feasible set (in both the deterministic and stochastic settings).

In this work, we extend GDAM to the stochastic setting. More precisely, the gradient evaluation of the objective function is considered to be noisy. In the deterministic setting, GDAM results in optimization trajectories that travel alongside the central path within a $\Theta_\zeta$ neighborhood defined by the hyperparameter $\zeta$ (a larger $\zeta$ indicates a smaller cone neighborhood, see \cite[Lemma 5.6]{chen2023gradient}). In the stochastic setting, however, the central path is no longer unique and varies with the gradient noise. To ensure progress in reducing the objective function value, we relax the parameter $\zeta = \tau \zeta, 0 < \tau < 1$, when the optimization progress stalls. Intuitively, we allow a larger central path neighborhood to bound a number of central paths resulting from the imprecise gradient information. We show a pseudocode of stochastic GDAM in Algorithm \ref{alg:stogdam}, which uses a fix-length step size rule, i.e., $\| x_{k+1} - x_k \| = \beta.$ Furthermore, the algorithm can be equipped with advanced stochastic optimization methods, such as ADAM \cite{kingma2014adam}, to accelerate the optimization.

For aerodynamic shape design problems, our observation is that a design should approach the boundary of the feasible set close enough so as to achieve good performance in the objective function. In the next section, we report computational experiments and show that stochastic GDAM achieves fairly good design solutions at the constraint boundaries. 


\begin{algorithm}
\caption{Stochastic GDAM}\label{alg:stogdam}
\begin{algorithmic}[1]
\State \textbf{Initialization:} $x_0$, $\zeta$, $\tau$, $\beta$, $n$, $k = 0$,
\WHILE{Stopping criteria is not met}
\State Sample $\xi_1, ..., \xi_n$ and compute $\nabla f(x_k; \xi_1), ..., \nabla f(x_k; \xi_n)$.
\State Compute $g_i(x_k), i = 1,...,m$.
\State Compute $\Tilde{s}_\zeta(x_k) = \frac{1}{n}\sum_{i=1}^n \mathbf{s}_\zeta(x_k; \xi_i)$.
\State \textbf{Update:} 
$\alpha_k \leftarrow \frac{\beta}{\| \Tilde{s}_\zeta (x_k) \|}$,
$x_{k+1}  \leftarrow x_k + \alpha_k  \Tilde{s}_\zeta (x_k),$ 
$k \leftarrow k+1$
\State \textbf{Update:} $\zeta \leftarrow \tau \zeta$, if $\mathbb{E}_{\hat{P}_n} \left[ f(x_{k+1};\xi) \right] > \mathbb{E}_{\hat{P}_n} \left[ f(x_{k};\xi) \right] $
\State \textbf{(Line search to update $\beta$)}
\ENDWHILE
\end{algorithmic}
\end{algorithm}

\section{Preliminary computational examples}

We apply the above presented concepts frameworks and algorithms to aerodynamic shape design optimization of the RAE2822 airfoil in transonic turbulent flow 
with an uncertain free-stream Mach number that assumes to have a ``true'' distribution as a normal distribution $N(\mu, \sigma)$, where $\mu = 0.729$ and $\sigma = 0.01$. We draw samples from this distribution to construct an empirical distributions $\hat{P}_N$ to facilitate the experiment. In the following, we consider the deterministic aerodynamic shape optimization problem, a probabilistic approach based on the empirical mean and a distributionally robust design approach. 

\subsection{Deterministic aerodynamic shape optimization of RAE2822 in transonic turbulent flow}

First, we consider the deterministic shape optimization problem based on the mean Mach number, i.e., $\bar{\xi}=0.729$ (which we refer to as the nominal Mach number in the following). Furthermore, the lift coefficient is fixed to a value of $0.724$. This is directly done in the CFD solver by adjusting the angle of attack in an iterative fashion. The optimization goal is to minimize the drag coefficient $c_d$ while constraining the the pitching moment $c_m$ and the maximum thickness $t$ of the airfoil. The corresponding optimization problem reads
\begin{equation}
    \begin{split}
        &\min_{x} ~~c_d(x)  \\
        &~~\text{s.t.} ~~~ c_m(x) \leq 0.093, t(x) \geq 0.12.
    \end{split}
    \label{eq:detAero}
\end{equation}

The underlying flow equations are modeled using the Reynolds-averaged Navier-Stokes equations with a Spalart-Allmaras turbulence model and are solved with the help of the open-source software SU2 \cite{Economon2016}. The solution method is a finite volume method based on a vertex-based median-dual grid using a Jameson-Schmidt-Turkel scheme for the spatial discretization. The design variables $x$ are composed of $38$ Hicks-Henne parameters describing the shape. More specifically, SU2 enables the provision of sensitivites of the objective function with respect to the design variables with the help of the discrete adjoint method based on a fixed-point formulation \cite{Sagebaum2019}. The software suite also directly provides a Python framework for design optimization that also enables the integration of optimizers like GDAM. 

In a first experiment, the integrated Python optimizer SLSQP \cite{Kraft1988} and GDAM are both applied to solve problem \eqref{eq:detAero}. We compare the results of GDAM to the results of SLSQP since SLSQP is commonly applied in the context of constrained aerodynamic shape optimization problems based on SU2. 

GDAM converges to a design depicted in Figure \ref{fig:det_design} with a reduction of the drag coefficient of the initial design $c_d = 0.013462$ to a drag coefficient of $c_d = 0.010655$. Note that SLSQP converges to a similar design (see Figure \ref{fig:det_design}) with a comparable drag coefficient of $c_d = 0.010620$.

\begin{figure}[h]
	\centering
	\begin{footnotesize}
		\includegraphics[width=10.5cm]{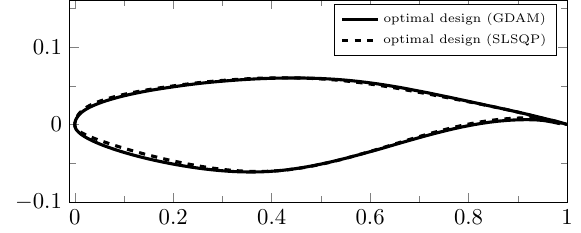}
		\caption{Resulting shapes of deterministic shape optimization problem with GDAM (bold line) and SLSQP (dashed line).}
		\label{fig:det_design}
	\end{footnotesize}
\end{figure}


\subsection{Probabilistic aerodynamic shape optimization of RAE2822 in transonic turbulent flow} 

The empirical mean optimization problem with an uncertain Mach number as introduced in \eqref{eq:EMO} is given by
\begin{equation}
    \begin{split}
        &\min_{x} ~~\mathbb{E}_{\xi \sim \mathcal{N}(0.729, 0.01)}(c_d(x \textcolor{blue}{;} \xi) )  \\
        &~~\text{s.t.} ~~~ c_m(x \textcolor{blue}{;} \bar{\xi}) \leq 0.093, t(x) \geq 0.12.
    \end{split}
    \label{eq:ASSO}
\end{equation}


Based on the algorithm described in Section \ref{sec:algorithms} we find a design by using GDAM in a probabilistic setting. Here, we make use of $16$ samples from $\hat{P}_N$ in each optimization step. In Figure \ref{fig:robust_design}, the resulting design is compared to the design observed for the deterministic optimization. For the optimal design based on the empirical mean, we observe that the trailing edge is thicker. The mean value corresponding to this design, approximated using $200$ samples, is $\mathbb{E}(c_d)=0.011355$. The drag coefficient corresponding to the nominal Mach number is given by $c_d=0.011148$. Note, that the approximated mean value of the design found by the deterministic optimization is $\mathbb{E}(c_d)=0.011527$.

\begin{figure}[h]
	\centering
	\begin{footnotesize}
		\includegraphics[width=10.5cm]{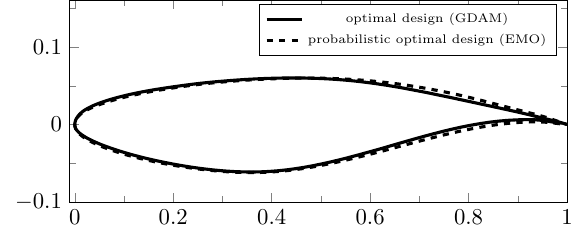}
		\caption{Resulting shapes of deterministic shape optimization problem with GDAM (bold line) and empirical mean optimization problem with stochastic GDAM (dashed line).}
		\label{fig:robust_design}
	\end{footnotesize}
\end{figure}

Details on the performance of the design for different Mach numbers in a given range around the nominal Mach number can be seen in Figure \ref{fig:robust_mach}. While the drag coefficient of the optimal design (bold line) is sensitive around the
Mach number of 0.729, it is not very sensitive for the design found by the probabilistic optimization. The strong
increase of the drag coefficient is also shifted to a higher Mach number. This already shows the advantages of considering an optimization based on the empirical mean. 

\begin{figure}[h]
	\centering
	\begin{footnotesize}
		\includegraphics[width=10.5cm]{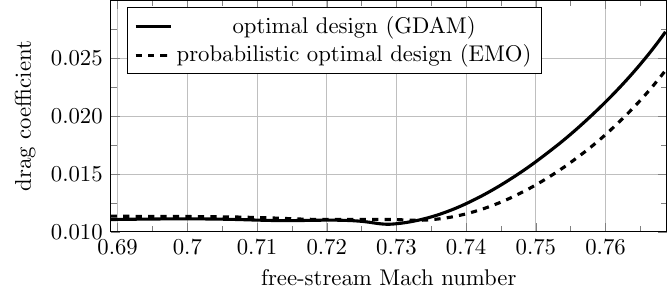}
		\caption{Behaviour of the drag coefficient in a range around the nominal value of the Mach number for the deterministic (bold line) and the probabilistic design (dashed line).}
		\label{fig:robust_mach}
	\end{footnotesize}
\end{figure}


\subsection{Distributionally robust aerodynamic design of RAE2822 in transonic turbulent flow}

We finally apply the two different strategies for distributionally robust optimization using $\chi^2$-divergence as introduced in Section \ref{sec:frameworks} for aerodynamic design. We start with the constrained formulation according to \eqref{eq:DRO} resulting in the optimization problem

\begin{equation}
\begin{split}
&    \min_x   \max_{Q \in \mathdutchcal{D}(\mathcal{N}(0.729, 0.01), \rho)} \mathbb{E}_{\xi \sim Q} \left( c_d(x ; \xi)  \right), \\
& ~\textnormal{s.t.}~~~ c_m(x ; \bar{\xi}) \leq 0.093, t(x) \geq 0.12.
\end{split}
\end{equation}

Given a value of $\rho = \textcolor{blue}{0.5}$, stochastic GDAM is applied to solve the optimization problem using $16$ samples in each step. The resulting design, with a drag coefficient of $c_d = 0.011272$ for the nominal value, is shown in Figure \ref{fig:dro_design} and only small differences to the design based found by optimizing the empirical mean. 

\begin{figure}[h]
	\centering
	\begin{footnotesize}
		\includegraphics[width=10.5cm]{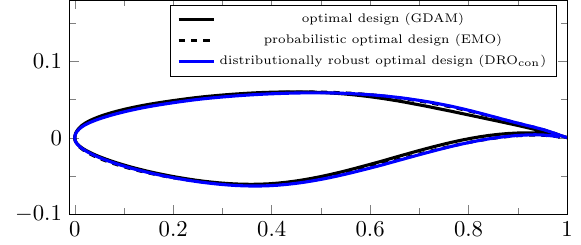}
		\caption{Resulting shapes of deterministic shape optimization (bold line), empirical mean optimization (dashed line) and constrained-based distributionally robust optimization (blue line).}
		\label{fig:dro_design}
	\end{footnotesize}
\end{figure}

Even though the designs are very similar, Figure \ref{fig:dro_mach} shows why the result of the distributionally robust optimization can be considered more robust regarding changes in the Mach number. The behaviour for low Mach numbers is comparable to the result of the probabilistic optimization, while for higher Mach number the increase in the drag coefficient is further reduced. 


\begin{figure}[h]
	\centering
	\begin{footnotesize}
		\includegraphics[width=10.5cm]{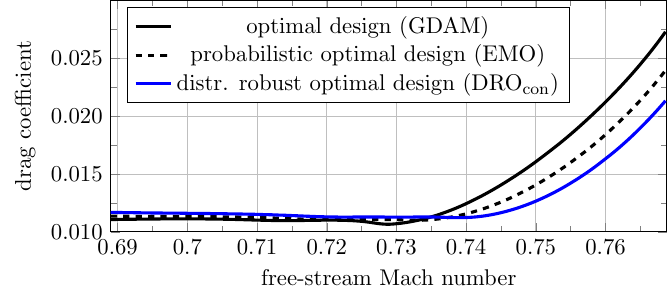}
		\caption{Behaviour of the drag coefficient in a range around the nominal value of the Mach number for the deterministic (bold line), the probabilistic design (dashed line), and the distributionally robust design (blue line).}
		\label{fig:dro_mach}
	\end{footnotesize}
\end{figure}

We also apply stochastic GDAM for solving the distributionally robust optimization problem based on penalization \ref{eq:penalized_DRO}, i.e.,
\begin{equation}
\begin{split}
& \min_x \max_Q \{\mathbb{E}_{\xi \sim Q} \left( c_d(x ;\xi)\right) - \delta d(Q,\mathcal{N}(0.729, 0.01))  \}, \\
& ~\textnormal{s.t.}~~  c_m(x; \bar{\xi}) \leq 0.093, t(x) \geq 0.12
\end{split} 
\end{equation}
with a penalty factor of $\delta= 0.001$. The optimization uses a number of $16$ samples per iteration and converges in about $200$ iterations to a similar design as for the constrained formulation, corresponding to a drag coefficient of $c_d=0.011203$ in the nominal value. Figure \ref{fig:dro_mach_penal} shows that the behaviour for different Mach numbers is also comparable.

\begin{figure}[h]
	\centering
	\begin{footnotesize}
		\includegraphics[width=10.5cm]{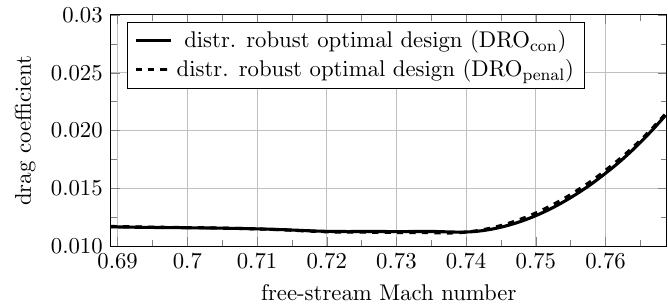}
		\caption{Behaviour of the drag coefficient in a range around the nominal value of the Mach number for the distributionally robust design using the constraint formulation (bold line) and the penalized formulation (dashed line).}
		\label{fig:dro_mach_penal}
	\end{footnotesize}
\end{figure}






\section{Discussions}
We formulate, study, and solve data-driven aerodynamic shape design problems with distributionally robust optimization approaches. Although DRO is a relatively new framework for engineering design, we point out that there is a strong connection between DRO and the robust design method of Taguchi, a proven and successful method in many engineering industries. While the latter is formulated based on the statistical moments of the cost (performance measures), the former is based on observed data with uncertainties (uncontrollable and unknown noise factors). Our preliminary computational experiments on aerodynamic shape design in transonic turbulent flow show promising design results.

\bibliographystyle{tfs}
\bibliography{data_driven_aso}

\end{document}